\let\doendproof\endproof
\renewcommand\endproof{~\hfill$\qed$\doendproof}
\setlist[enumerate]{leftmargin=1.2cm}
\newtheorem{observation}[lemma]{Observation}
\title{Approximation Ratio of the Min-Degree Greedy Algorithm for Maximum Independent Set on Interval and Chordal Graphs}
\author{Steven Chaplick\inst{1}, Martin Frohn\inst{1}, Steven Kelk\inst{1},\\ Johann Lottermoser\inst{1}, Mat\'u\v{s} Mihal\'ak\inst{1}}
\institute{Department of Advanced Computing Sciences (DACS),\\ Maastricht University, The Netherlands.}
\date{December 2023}
\begin{document}

\maketitle

\begin{abstract}
In this article we prove that the minimum-degree greedy algorithm, with adversarial tie-breaking, is a $(2/3)$-approximation for the \textsc{Maximum Independent Set} problem on interval graphs. We show that this is tight, even on unit interval graphs of maximum degree 3. We show that on chordal graphs, the greedy algorithm is a $(1/2)$-approximation and that this is again tight. These results contrast with the known (tight) approximation ratio of $\frac{3}{\Delta+2}$ of the greedy algorithm for general graphs of maximum degree $\Delta$.
\end{abstract}

\section{Introduction and Preliminaries}
\subsection{Background}
Let $G$ be an undirected graph. For a vertex $v \in G$ we write $N(v)$ to denote the set of neighbours of $v$, and $N[v]$ to denote the closed neighbourhood of $v$, i.e. $N(v) \cup \{v\}$. The degree of a vertex $v$ is denoted by $d(v)$. 
We write $\Delta(G) = \Delta$ to denote $\max_{v \in G} d(v)$, and $\delta(G) = \delta$ to denote $\min_{v \in G} d(v)$. 
Set $I \subseteq V$ is an \emph{independent set} of graph $G$ if all the vertices in $I$ are pairwise non-adjacent in $G$. The computational problem \textsc{Maximum Independent Set (MIS)} asks to compute, for a given undirected graph $G$, an independent set of maximum size. The problem is NP-hard \cite{gareyjohnson}.
The \emph{minimum-degree greedy algorithm}, hereafter referred to as \textsc{Greedy}, computes an independent set $I$ of $G$ as follows: 
(1) Set $I := \emptyset$. 
(2) While $G$ still contains vertices, select a vertex $v$ where $d(v) = \delta(G)$, set $I := I \cup \{v\}$, set $G := G \setminus N[v]$, and continue.

This article concerns how close \textsc{Greedy} comes to computing a maximum independent set. To ensure that the question is well-defined, we need to clarify how \textsc{Greedy} breaks ties i.e. given multiple minimum-degree vertices, how it decides which one to pick. In this article we assume `adversarial' tie-breaking: that is, we study worst-case tie-breaking (contrasting with the variant when \textsc{Greedy} has `advice' available to help it break ties advantageously \cite{halldorsson1995greedy,krysta2020ultimate}).
This fits naturally with the worst-case analytical framework of \emph{approximation ratios} of algorithms \cite{papa91}. Here, an \emph{approximation ratio} of an algorithm is the infimum, ranging over all inputs $G$, of the ratio between the size of the solution returned by \textsc{Greedy} on $G$ and the size of a maximum independent set of $G$.
An algorithm for \textsc{Maximum Independent Set} that runs in polynomial time and for any input $G$ always computes an independent set $I_a$ such that $|I_a|\geq \rho\cdot |I^*|$, where $I^*$ is a maximum independent set of $G$, is called a \emph{$\rho$-approximation algorithm} (for \textsc{Maximum Independent Set}).
The value $\rho$ is called an \emph{approximation guarantee} of the algorithm.

Bodlaender et al.\, note that on trees, cycles, split graphs, complete $k$-partite graphs, and complements of $k$-trees \textsc{Greedy} is optimal \cite{bodlaender1997hard}. However, on general graphs, with adversarial tie-breaking, \textsc{Greedy} is a $\frac{3}{\Delta+2}$ approximation \cite{halldorsson1997greed},
and this approximation guarantee is is tight.
See also \cite{sakai2003note,kako2009approximation,boppana2020simple,Mestre06,korte1978analysis} for related literature.
Here we restrict our attention to \emph{interval} graphs and \emph{chordal} graphs. A graph $G$ is interval graph if the vertices can be modelled as closed intervals of the real numbers, such that two vertices are adjacent if the intersection of the two corresponding intervals is non-empty; such a family of intervals is called an \emph{interval representation} of $G$\footnote{\textsc{Maximum Independent Set} on interval graphs is also studied in the \emph{interval scheduling} literature \cite{kolen2007interval}, where the intervals are viewed as jobs with fixed start and end times.}. 
Interval graphs are a subclass of the chordal graphs. A graph is chordal if, for every simple cycle on 4 or more vertices, there exist two non-consecutive vertices on the cycle that are adjacent. Chordal graphs and interval graphs are closed under vertex deletion.

The problem \textsc{Maximum Independent Set} can be solved to optimality in linear time on chordal graphs \cite{rosetarjan}.
Probably because of this strongly positive result, the approximation ratio of \textsc{Greedy} has never been studied on interval or chordal graphs.
Here we close this gap in the literature by proving that on chordal graphs \textsc{Greedy} is a (1/2)-approximation, and that this approximation guarantee is tight. We then slightly strengthen this analysis to show that on interval graphs \textsc{Greedy} is a (2/3)-approximation, and that this approximation guarantee is tight, even on \emph{unit}-interval graphs of maximum degree 3. (A unit-interval graph is one having an interval representation in which all intervals have the same length). 
For completeness we then show that \textsc{Greedy} does not give constant-factor approximations on two natural generalizations of interval graphs and chordal graphs.
We conclude with a discussion of several open problems.

\subsection{Further preliminaries}
A \emph{clique} of a graph is a set of pairwise adjacent vertices.
A vertex $v$ is \emph{simplicial} if the graph induced by $N[v]$ is a clique. It is well known that every chordal graph $G$ has at least one simplicial vertex.
Note that if $d(v) \leq 1$ then $v$ is simplicial.

Clearly, if $G$ has more than one connected component, the size of a maximum independent set is equal to the sum of the maximum independent sets of the connected components. As various other authors have noted, the following also holds:
\begin{observation}
\label{obs:interleave}
The execution of \textsc{Greedy} on a disconnected graph $G$ can be analysed by studying the execution of \textsc{Greedy} independently on the connected components of $G$. 
\end{observation}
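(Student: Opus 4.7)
The plan is to establish the observation by arguing that the possible outputs of \textsc{Greedy} on $G$, under adversarial tie-breaking, coincide exactly with the unions $\bigcup_i I_{C_i}$, where the $C_i$ are the connected components of $G$ and each $I_{C_i}$ is a possible output of \textsc{Greedy} on $C_i$. From this equivalence the size achieved by \textsc{Greedy} on $G$ is the sum of sizes achieved on the components, so the approximation analysis can be carried out component-wise.

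The key lever is a degree-preservation fact: since $G$ is disconnected, any vertex $v \in C_i$ has the same degree in $G$ as in $C_i$, and the same remains true after successively deleting closed neighbourhoods of previously chosen vertices, because no such deletion in a different component $C_j$ can affect $v$'s neighbourhood. In particular $\delta(G)=\min_i \delta(C_i)$ at every step, and whenever \textsc{Greedy} selects a minimum-degree vertex $v \in C_i$ of the current $G$, this $v$ is simultaneously a minimum-degree vertex of the current $C_i$.

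Using this, I would prove both directions by induction on the number of picks. For one direction, take any run of \textsc{Greedy} on $G$ producing $v_1, \dots, v_k$; restricting to each $C_i$ gives a subsequence $S_i$, and the degree-preservation fact together with the inductive hypothesis shows that $S_i$ is itself a valid adversarial Greedy execution on $C_i$. For the converse, given valid Greedy executions of the $C_i$, I would interleave them in any order, letting the next global pick come from whichever component presently attains $\min_i \delta(C_i)$; by $\delta(G)=\min_i \delta(C_i)$ and the tie-breaking freedom, this is a valid Greedy run on $G$.

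The only real subtlety is notational, namely phrasing the adversarial tie-breaking semantics precisely enough that ``a valid run on $G$'' and ``an interleaving of valid runs on the $C_i$'' are literally the same object; no combinatorial difficulty remains once degree-preservation is in hand.
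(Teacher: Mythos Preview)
Your proposal is correct and follows essentially the same approach as the paper: both hinge on the fact that selecting a vertex in one component leaves the degrees of vertices in all other components untouched, so the per-component subsequences of any global \textsc{Greedy} run are themselves valid \textsc{Greedy} runs on the components (and vice versa). Your write-up is simply more explicit than the paper's one-line justification, spelling out both directions of the equivalence and the interleaving construction.
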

This is due to the fact that \textsc{Greedy} selecting a vertex in one connected component has no impact on vertices in the other connected components. Hence, the order that \textsc{Greedy} selects vertices within a single connected component is unaffected by its choices in other components.

The following two observations are also well known (see e.g. \cite{halldorsson1997greed}).

\begin{observation}
\label{obs:simplicial}
Let $G$ be an undirected graph and $v$ a simplicial vertex of $G$. There exists a maximum independent set that includes $v$.
\end{observation}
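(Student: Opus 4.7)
The plan is to proceed by a standard exchange argument. I would start from an arbitrary maximum independent set $I^*$ of $G$ and modify it, if necessary, so that the resulting set still has size $|I^*|$, is still independent, and contains $v$.

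First I would observe that because $v$ is simplicial, $N[v]$ induces a clique. Since an independent set contains at most one vertex from any clique, we have $|I^* \cap N[v]| \leq 1$. I would then split into two cases based on this intersection. If $I^* \cap N[v] = \emptyset$, then $v$ has no neighbour in $I^*$, so $I^* \cup \{v\}$ is independent; this would strictly exceed $|I^*|$, contradicting maximality unless $v \in I^*$ already, in which case we are done immediately. If instead $I^* \cap N[v] = \{u\}$ for some $u$, then either $u = v$ (and we are done), or $u \in N(v)$, in which case I would form $I' := (I^* \setminus \{u\}) \cup \{v\}$.

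The verification that $I'$ is independent is the only step requiring care, though it is straightforward: the only potentially new conflicts involve $v$, but every neighbour of $v$ in $I^*$ lies in $I^* \cap N(v) \subseteq \{u\}$, and $u$ has been removed. Since $|I'| = |I^*|$, the set $I'$ is a maximum independent set containing $v$, as required.

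I do not expect any genuine obstacle here; the only subtlety is ensuring the case distinction on $I^* \cap N[v]$ is exhaustive and correctly handles the subcase $v \in I^*$. The key ingredient is simpliciality, which is used exactly once, to bound $|I^* \cap N[v]| \leq 1$ via the clique property.
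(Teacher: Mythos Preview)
Your exchange argument is correct and is exactly the standard proof of this fact. The paper itself does not supply a proof of this observation at all; it merely states it as well known and defers to the literature (citing \cite{halldorsson1997greed}). So there is nothing to compare against, and your write-up would serve perfectly well as the omitted justification.

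One minor wording quibble: in the case $I^* \cap N[v] = \emptyset$, note that $v \in N[v]$, so automatically $v \notin I^*$; the clause ``unless $v \in I^*$ already'' is therefore vacuous in that branch. Cleaner phrasing is simply that this case is impossible by maximality, forcing $|I^* \cap N[v]| = 1$, after which your swap argument finishes the job.
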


\begin{observation}
\label{obs:deg2}
Let $G$ be an undirected graph. If $\Delta \leq 2$ then \textsc{Greedy} is optimal.
\end{observation}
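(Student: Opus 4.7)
The plan is to reduce to the case of a single connected component, observe that every such component must be a path or a cycle, and then verify optimality of \textsc{Greedy} on each of these two families.

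First I would invoke Observation \ref{obs:interleave} to restrict attention to a single connected component $H$ of $G$. Since $\Delta(G)\leq 2$, every vertex of $H$ has degree at most $2$, so $H$ is either a single vertex, a path $P_n$, or a cycle $C_n$. The single vertex case is trivial, so the argument splits into two cases: paths and cycles.

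For paths, I would proceed by induction on $n$. An endpoint $v$ of $P_n$ has degree at most $1$, so it is simplicial; hence \textsc{Greedy} picks some endpoint first, and by Observation \ref{obs:simplicial} there is a maximum independent set containing $v$. The graph $G\setminus N[v]$ is a path on $n-2$ vertices (or empty), still satisfies $\Delta\leq 2$, and by the inductive hypothesis \textsc{Greedy} solves it optimally; combining, the overall solution is optimal.

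For cycles $C_n$ with $n\geq 3$, every vertex has degree $2$, so \textsc{Greedy} selects an arbitrary vertex $v$. By vertex-transitivity of $C_n$, there is a maximum independent set containing $v$, so it is safe to include $v$. After deleting $N[v]$ we obtain the path $P_{n-3}$ (interpreting $P_0$ as empty), and by the previous case \textsc{Greedy} returns an independent set of size $\lceil (n-3)/2\rceil$ there. Adding $v$ yields $1+\lceil (n-3)/2\rceil = \lfloor n/2\rfloor$, which matches the known independence number $\alpha(C_n)=\lfloor n/2\rfloor$.

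There is no real obstacle here; the only mild subtlety is justifying that an arbitrary vertex of $C_n$ lies in some maximum independent set, which follows immediately from the symmetry (vertex-transitivity) of the cycle, and checking the small arithmetic identity $1+\lceil (n-3)/2\rceil = \lfloor n/2\rfloor$ in both parities of $n$.
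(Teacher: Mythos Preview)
Your proposal is correct and follows essentially the same approach as the paper: decompose into connected components (isolated vertices, paths, cycles), use that endpoints of paths are minimum-degree simplicial vertices together with Observation~\ref{obs:simplicial}, and for cycles use symmetry to justify the first pick before reducing to a path. The paper's argument is terser (it omits the explicit induction and the $\lfloor n/2\rfloor$ computation), but the structure and key observations are identical.
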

\begin{proof}
A graph with $\Delta \leq 2$ consists of isolated vertices, paths and cycles. \textsc{Greedy} is easily seen to be optimal on all three types of graph \cite{bodlaender1997hard}. Specifically, isolated vertices and paths always have a minimum degree of 0 and 1, and such vertices are simplicial, so Observation \ref{obs:simplicial} holds. For cycles, every vertex can be part of some maximum independent set, so whichever vertex \textsc{Greedy} chooses on the cycle will be optimal, and after this step, \textsc{Greedy} operates on a path, for which \textsc{Greedy} is optimal. 
\end{proof}

\section{Tight Bounds: Chordal and Interval Graphs}

In this section we give matching lower and upper bounds for the performance of \textsc{Greedy} on the classes of chordal and interval graphs. 

\subsection{Lower bounds on the approximation ratio}

Here we first prove that \textsc{Greedy} is a (1/2)-approximation algorithm on chordal graphs.
We then slightly strengthen the analysis to prove that \textsc{Greedy} is a (2/3)-approximation algorithm on interval graphs. 
First, we require a number of auxiliary results.

\begin{lemma}
\label{lem:notadjacent}
Let $G$ be a (not necessarily chordal) graph and let $v$ be the first vertex picked by \textsc{Greedy}. Suppose $v$ has $k \geq 2$ neighbours $u_1, \ldots, u_k$ that are pairwise non-adjacent, i.e. $\{u_1,\ldots,u_k\}$ is an independent set. Then each vertex $u_1, \ldots, u_k$ has at least $k-1$
neighbours that are not in $N[v]$.
\end{lemma}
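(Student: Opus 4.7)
The plan is a direct double-counting argument exploiting the two hypotheses: (i) $v$ has minimum degree in $G$ (because \textsc{Greedy} picks it first) and (ii) the $u_i$ are pairwise non-adjacent. I would fix an arbitrary index $i$ and count the neighbours of $u_i$ that lie inside $N[v]$, then subtract from $d(u_i)$.

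First I would observe that $d(v) = \delta(G) \leq d(u_i)$ since \textsc{Greedy} selected $v$ as a minimum-degree vertex. Next, I would partition $N(v)$ into $\{u_1, \ldots, u_k\}$ and $N(v) \setminus \{u_1, \ldots, u_k\}$; the latter has size $d(v) - k$. Because $\{u_1, \ldots, u_k\}$ is independent, $u_i$ has no neighbour among $\{u_1,\ldots,u_k\} \setminus \{u_i\}$. Hence the neighbours of $u_i$ inside $N[v]$ are contained in $\{v\} \cup (N(v) \setminus \{u_1,\ldots,u_k\})$, giving at most $1 + (d(v) - k)$ such neighbours.

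Therefore the number of neighbours of $u_i$ lying outside $N[v]$ is at least
\[
d(u_i) - \bigl(1 + d(v) - k\bigr) \;\geq\; d(v) - 1 - d(v) + k \;=\; k-1,
\]
which is the claimed bound. Since $i$ was arbitrary, the inequality holds for every $u_i$.

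There is no real obstacle here: the only subtlety is being careful that the term $d(v) - k$ is a valid (non-negative) count of the remaining neighbours of $v$, which is immediate from $\{u_1,\ldots,u_k\} \subseteq N(v)$, and that the minimum-degree property of $v$ is applied to bound $d(u_i)$ from below rather than from above. The argument does not use chordality, matching the lemma's phrasing ``not necessarily chordal.''
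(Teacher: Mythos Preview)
Your proof is correct and follows essentially the same approach as the paper: both bound the neighbours of $u_i$ inside $N[v]$ by observing they must lie in $N[v]\setminus\{u_1,\ldots,u_k\}=\{v\}\cup(N(v)\setminus\{u_1,\ldots,u_k\})$, then invoke $d(u_i)\ge d(v)$. The only cosmetic difference is that the paper phrases the argument by contradiction (assume at most $k-2$ outside, derive $d(u)<d(v)$), whereas you compute the lower bound directly.
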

\begin{proof}
Due to the fact that \textsc{Greedy} picked $v$ first, we have $d(u_i) \geq d(v)$ for each $u_i$. Suppose, for the sake of contradiction, that some vertex $u \in \{u_1, \ldots, u_k\}$ has at most $k-2$ neighbours outside $N[v]$. Observe that all neighbours of $u$ that lie inside $N[v]$, also lie in $N[v] \setminus \{u_1,\ldots,u_k\}$. Combined with the fact that $\{u_1, \ldots, u_k\} \subseteq N[v]$, we have $d(u) \leq (|N[v]| - k) + (k-2) \leq |N[v]|-2$. However, $d(v) = |N[v]|-1$, so $d(u) < d(v)$, contradiction.
\end{proof}

A \emph{tree decomposition} of an undirected graph $G=(V,E)$ is a pair
$(\mathcal{B}, \mathbb{T})$ where $\mathcal{B} = \{B_1, \dots ,B_q\}$, $B_i \subseteq V(G)$, is a (multi)set of \emph{bags}
and $\mathbb{T}$ is a tree with $q$ nodes, which are in bijection with $\mathcal{B}$, and satisfy the following three properties:

\begin{enumerate}
    \item[(tw 1)] $\cup_{i=1}^q B_i = V(G)$;
   \item[(tw 2)] $\forall e = \{ u,v \} \in E(G), \exists B_i \in \mathcal{B} \mbox{ s.t. } \{u,v\} \subseteq B_i$;
    \item[(tw 3)] $\forall v \in V(G)$, all the bags $B_i$ that contain $v$ form a connected subtree of~$\mathbb{T}$. Let $\mathbb{T}_v$ denote this subtree of $\mathbb{T}$.
\end{enumerate}


A clique $C$ of $G$ is \emph{maximal} if there does not exist a clique $C'$ such that $C \subsetneq C'$. It is well known that for a chordal graph $G$ there exists a tree decomposition
$(\mathcal{B}, \mathbb{T})$ of $G$ whereby $\mathcal{B}$ is exactly the set of maximal cliques in $G$ \cite{diestel2017}. If $G$ is an interval graph, we may furthermore assume that $\mathbb{T}$ is a path \cite{golumbic1985interval} -- we shall use this fact later.

Let $I$ be an arbitrary independent set of $G$, and let $v$ be the first vertex picked by \textsc{Greedy}. We say that this is a $j$-\emph{move with respect to $I$} if $|N[v] \cap I| = j$. Informally, if $I$ is a maximum independent set, and \textsc{Greedy} makes a 0-move or a 1-move, then it `keeps up with optimality'
but for $j \geq 2$, a $j$-move causes \textsc{Greedy} to fall behind optimality. However, as we shall see, $j$-moves $(j \geq 2)$ induce new 0-moves and 1-moves that will occur later in the execution of \textsc{Greedy}, such that aggregating over its entire execution \textsc{Greedy} never strays too far from an optimal solution.


\begin{lemma}
\label{lem:disconnect}
Let $G$ be a chordal graph and let $v$ be the first vertex picked by $\textsc{Greedy}$. Let $I$ be an arbitrary independent set of $G$. If $v$ is a $j$-move with respect to $I$, $j \geq 2$, then $G \setminus N[v]$ has at least $(j-1)$ more connected components than $G$.
\end{lemma}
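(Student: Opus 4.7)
The plan is to show that when the connected component $C_v$ of $G$ containing $v$ has $N[v]$ removed, it splits into at least $j$ connected components; since removing $N[v]$ does not affect any other component of $G$, this yields the desired increase of $j-1$ in the total component count.

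First I would observe that because $j \geq 2$ and $I$ is independent, $v \notin I$, so the set $\{u_1, \ldots, u_j\} := N(v) \cap I$ consists of $j$ pairwise non-adjacent neighbours of $v$. Applying Lemma \ref{lem:notadjacent} with $k = j$, each $u_i$ has at least $j - 1 \geq 1$ neighbours outside $N[v]$. The key claim I would then establish is: for all $i \neq i'$, no connected component of $G \setminus N[v]$ contains both an outside neighbour of $u_i$ and an outside neighbour of $u_{i'}$. Once this is proved, the (non-empty) sets of ``outside components'' associated with $u_1, \ldots, u_j$ are pairwise disjoint, giving at least $j$ distinct components of $C_v \setminus N[v]$.

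To prove the key claim, assume for contradiction that some connected component of $G \setminus N[v]$ contains both $w \in N(u_i) \setminus N[v]$ and $w' \in N(u_{i'}) \setminus N[v]$ for some $i \neq i'$. Then there is a cycle of the form $v, u_i, w, \ldots, w', u_{i'}, v$ whose vertices other than $v, u_i, u_{i'}$ lie in $G \setminus N[v]$; pick such a cycle $C$ of minimum length. Note $C$ has length at least $4$. If $C$ has length exactly $4$ (so $w = w'$), the only candidate chords $vw$ and $u_iu_{i'}$ are both non-edges by construction, contradicting chordality. If $C$ has length $\geq 5$, a chord exists by chordality, and I would argue case by case that it cannot use $v$ (its only edges into $C$ are $vu_i$ and $vu_{i'}$), cannot be $u_iu_{i'}$ (independence), and every remaining possibility (an edge from $u_i$ or $u_{i'}$ to an interior vertex, or between two interior vertices) produces a strictly shorter cycle of the same special form, contradicting minimality.

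The main obstacle is the chordality argument in the last paragraph: I need to verify that \emph{every} surviving type of chord yields a shorter cycle still passing through $v, u_i, u_{i'}$ with its remaining vertices in $G \setminus N[v]$. Each case reduces by cutting out the segment shortcut by the chord, but writing the bookkeeping uniformly (so that both ``endpoint-to-interior'' and ``interior-to-interior'' chords are handled together) will require some care.
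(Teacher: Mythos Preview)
Your proof is correct; the case analysis you flag as the main obstacle does go through, since every chord not already excluded by $x_k \notin N[v]$ or $u_i u_{i'} \notin E$ shortcuts the interior path while keeping $v, u_i, u_{i'}$ on the cycle and all remaining vertices in $G \setminus N[v]$. The paper takes a different, more structural route via the clique-tree decomposition of $G$: the subtrees $\mathbb{T}_{u_1}, \ldots, \mathbb{T}_{u_j}$ are pairwise disjoint (independence), each meets $\mathbb{T}_v$, and each extends into a distinct subtree pendant to $\mathbb{T}_v$ (because each $u_i$ has a neighbour $w_i \notin N[v]$ from Lemma~\ref{lem:notadjacent}); the $w_i$ then lie in distinct components of $G \setminus N[v]$ by the separator property of tree decompositions. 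Your argument is more elementary, using only the defining ``every cycle of length $\geq 4$ has a chord'' property. The payoff of the paper's approach comes downstream: when $G$ is an interval graph, $\mathbb{T}$ is a path, so $\mathbb{T}_v$ has at most two pendant subtrees, forcing $j \leq 2$ and immediately yielding Theorem~\ref{thm:interval}; more generally $j$ is bounded by the leafage of $G$. Your direct chordality argument proves the present lemma cleanly but would need a separate observation to recover those refinements.
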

\begin{proof}
Without loss of generality we assume that $G$ is connected. 
Let $(\mathcal{B}, \mathbb{T})$ be a tree decomposition of $G$ where the bags are the maximal cliques of $G$. 
Let $I'=\{u_1, \ldots, u_j\}$ be the elements of $I$ that are adjacent to $v$ in $G$. 
Since each bag is a clique, no two elements of $I'$ can be in the same bag of the tree decomposition. 
Consider the subtrees $\mathbb{T}_v$, $\mathbb{T}_{u_i}$ as in (tw 3). 
Given that the edges $\{v,u_i\}$ exist for each $u_i \in I'$, it follows by (tw 2) that for each $u_i \in I'$, some bag contains both $v$ and $u_i$. Hence,
$\mathbb{T}_v$ must intersect with $\mathbb{T}_{u_i}$, for each $u_i \in I'$.
Observe that $N[v]$ is exactly equal to the union of all the bags that contain $v$ i.e. all the bags in the subtree $\mathbb{T}_v$. 

\begin{figure}
\begin{center}
\includegraphics[scale=0.5]{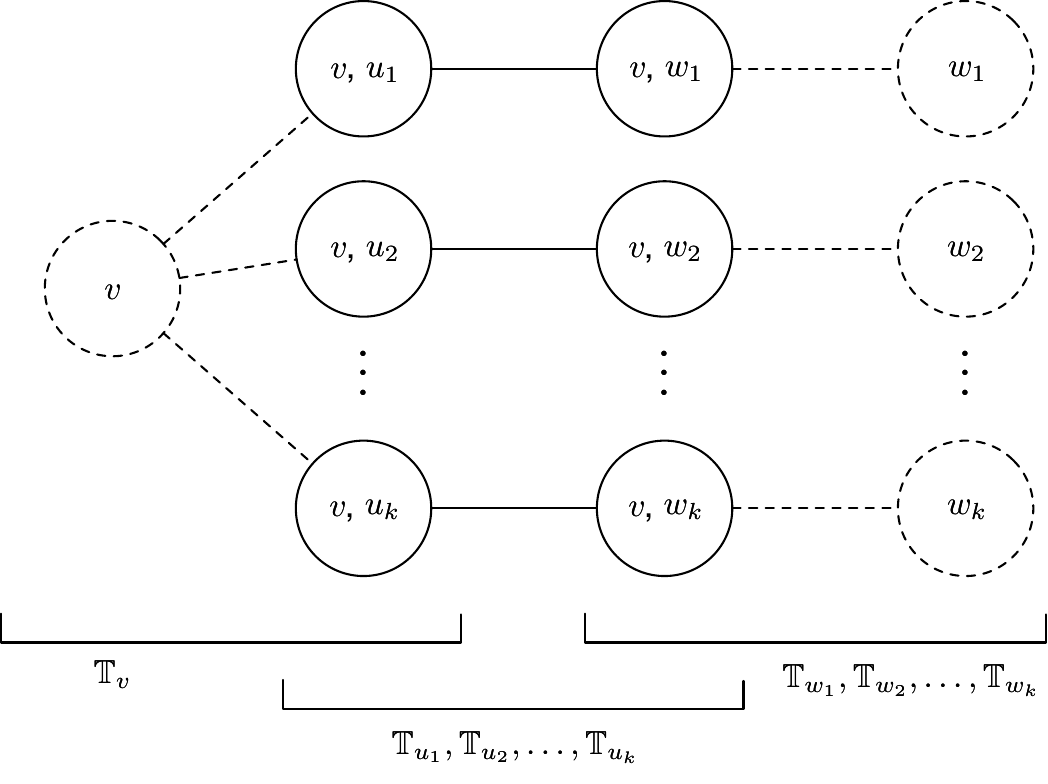}
\caption{A schematic depiction of the situation in the proof of Lemma \ref{lem:disconnect}. The vertices $w_1, \ldots, w_k$ will be in separate connected components of $G \setminus N[v]$. The set $N[v]$ is equal to the union of all the bags in $\mathbb{T}_v$.}
\label{fig:td}
\end{center}
\end{figure}

From Lemma \ref{lem:notadjacent} each node $u_i \in I'$ has some neighbour $w_i \neq v$ that is not adjacent to $v$. Hence, $w_i$ is not in any bag in the subtree $\mathbb{T}_v$ (so $\mathbb{T}_{w_i}$ is disjoint from $\mathbb{T}_{v}$). Next, we know that the edges $\{u_i, w_i\}$ exist,
so for each $u_i \in I'$ the subtrees $\mathbb{T}_{u_i}$ and $\mathbb{T}_{w_i}$ must intersect, due to (tw 2). 
Thus, for each $u_i \in I'$, $\mathbb{T}_{u_i}$ intersects with $\mathbb{T}_{v}$ but crucially also intersects with at least one bag (i.e. some bag of $\mathbb{T}_{w_i}$) that is \emph{not} part of $\mathbb{T}_{v}$. 
Now, let $X$ be the set of edges in $\mathbb{T}$ that have one endpoint in $\mathbb{T}_v$ and one point outside
it. Deleting the edges in $X$ naturally splits the tree decomposition into $\mathbb{T}_v$ and $|X|$ subtrees `pendant' to $\mathbb{T}_v$.


The following situation is shown schematically in Figure \ref{fig:td}. 
We know that the $\mathbb{T}_{u_i}$ are mutually disjoint, so each of the  $\mathbb{T}_{w_i}$ is entirely contained within a distinct single subtree pendant to $\mathbb{T}_{v}$. (Each $\mathbb{T}_{w_i}$ can be contained in only one pendant subtree, because if it was in two or more such subtrees, by (tw 3) it would intersect with $\mathbb{T}_{v}$. 
Next: if, say, $\mathbb{T}_{w_1}$ and $\mathbb{T}_{w_2}$ would be contained within the same subtree pendant to $\mathbb{T}_{v}$, then $\mathbb{T}_{u_1}$ and $\mathbb{T}_{u_2}$ would necessarily  intersect each other, in the bag of $\mathbb{T}_v$ incident to the pendant subtree.) 
It follows from the properties of tree decompositions that the vertices $w_i$ are all in distinct connected components of $G \setminus N[v]$. Specifically, suppose (say) some path from $w_1$ to $w_2$ survives in $G \setminus N[v]$. Then some vertex of that path would necessarily have been in a bag in $\mathbb{T}_{v}$ (due to $w_1$ and $w_2$ being in distinct pendant subtrees). But such a vertex is in $N[v]$ and thus would have been deleted; so such a path cannot exist.

To conclude: $G \setminus N[v]$ contains at least $j$ connected components.
\end{proof}

Now, fix an arbitrary maximum independent set $I$ of $G$. 
At each iteration, we consider the vertex that \textsc{Greedy} selects \emph{with respect to the part of $I$ that is still available at the start of the iteration}. So, if $v$ is the first vertex chosen by \textsc{Greedy} then this is with respect to $I$, the second vertex chosen is with respect to $I \setminus N[v]$, and so on. This allows us to unambiguously extend the notion of a $j$-move to all iterations of the algorithm, not just the first. Let $m_j$, $j\geq 0$, denote the total number of $j$-moves made by \textsc{Greedy} during its execution. It follows that the size of the solution returned by \textsc{Greedy} is $\sum_{j \geq 0} m_j$, and
\begin{equation}
|I| = \sum_{j \geq 0} j \cdot m_j.
\end{equation}

\begin{lemma}
\label{lem:unavoidable}
Let $G$ be a (not necessarily connected) chordal graph and $I$ an arbitrary independent set of $I$. Let $k$ be the number of connected components in $G$. Then $m_0 + m_1 \geq k$.  
\end{lemma}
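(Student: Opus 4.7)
The plan is to induct on $|V(G)|$, where $k$ and $I$ are allowed to vary between inductive calls. The base case $|V(G)|=0$ is trivial since $k=0$ and \textsc{Greedy} makes no moves.

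For the inductive step, I would let $v$ be the first vertex selected by \textsc{Greedy} and set $j := |N[v] \cap I|$, $G' := G \setminus N[v]$, $I' := I \setminus N[v]$. Since chordal graphs are closed under vertex deletion, $G'$ is chordal, and $I'$ is an independent set of $G'$; moreover, by the definition of moves relative to the still-available part of $I$, the remaining moves of \textsc{Greedy} on $G$ correspond exactly to the moves of \textsc{Greedy} on $G'$ with respect to $I'$. I split into two cases based on whether $j \leq 1$ or $j \geq 2$.

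If $j \leq 1$, then the first move contributes $1$ to $m_0 + m_1$. Since $N[v]$ is contained in the single connected component of $G$ that holds $v$, at most one component of $G$ disappears in $G'$, so $G'$ has at least $k - 1$ components. By the inductive hypothesis applied to $(G',I')$, the remaining moves contribute at least $k-1$ to $m_0+m_1$, giving a total of at least $k$.

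If $j \geq 2$, the first move contributes nothing to $m_0 + m_1$. However, applying Lemma \ref{lem:disconnect} to the connected component of $G$ containing $v$ (the other $k-1$ components are untouched by the removal of $N[v]$) shows that $G'$ has at least $k + j - 1 \geq k + 1$ connected components. The inductive hypothesis on $(G',I')$ then yields $m_0 + m_1 \geq k + 1 > k$, and we are done. The only real obstacle is confirming that Lemma \ref{lem:disconnect} is usable when $G$ is disconnected; but because $N[v]$ lies entirely within one component of $G$, the lemma (originally proved under the WLOG assumption that $G$ is connected) transfers cleanly to that component, and the components of $G'$ in the remaining $k-1$ components of $G$ simply add to the count.
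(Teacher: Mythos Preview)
Your proposal is correct and follows essentially the same approach as the paper: induction on $|V(G)|$, splitting on whether the first move is a $\{0,1\}$-move or a $j$-move with $j\geq 2$, and invoking Lemma~\ref{lem:disconnect} in the latter case. The only organisational difference is that the paper first reduces to the connected case ($k=1$) via Observation~\ref{obs:interleave} and then inducts, whereas you carry general $k$ through the induction directly; this is a cosmetic distinction rather than a different idea.
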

\begin{proof}
Connected components can be considered independently, due to Observation \ref{obs:interleave}. Hence, it is sufficient to prove the claim for $k=1$. We prove this by induction on the number of vertices in $G$. 
It can easily be verified that for every connected chordal graph $G$ with at most 3 vertices\footnote{These are: single vertex, 2-vertex path, 3-vertex path, triangle.}, and every not necessarily maximum independent set $I$ of $G$, the first step taken by \textsc{Greedy} is a 0-move or a 1-move with respect to $I$. 
Suppose, then, that $G$ has four or more vertices, and let $I$ be an arbitrary independent set of $G$. If the first step taken by greedy is a 0-move or a 1-move then we are done. If not, it is a $j$-move for some $j\geq 2$, which from Lemma \ref{lem:disconnect} disconnects $G$ into two or more connected components, all of which are smaller than $G$ and are chordal graphs. By induction each connected component has at least one 0-move or 1-move. The claim then follows (again, leveraging Observation \ref{obs:interleave}). 
\end{proof}

\begin{theorem}
\label{thm:chordal}
\textsc{Greedy} is at least a (1/2)-approximation on chordal graphs.    
\end{theorem}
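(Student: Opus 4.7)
The plan is to prove the theorem by induction on $|V(G)|$, leveraging Lemma \ref{lem:disconnect}. A direct induction on the bare inequality $2|A(G)| \geq |I|$ (where $A(G)$ denotes Greedy's output) collapses as soon as Greedy's first step is a $j$-move with $j \geq 3$: applying the inductive hypothesis to $G' = G \setminus N[v]$ and accounting for Greedy's one new pick yields only $2|A(G)| \geq |I| - j + 2$, which is strictly too weak. The cure is to strengthen the inductive invariant so that it keeps track of the current number of connected components --- the same quantity that already appears in Lemma \ref{lem:unavoidable}.

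Concretely, I would prove by induction on $|V(G)|$ the following strengthened claim: for every chordal graph $G$ and every independent set $I$ of $G$,
\[
2\,|A(G)| \;\geq\; |I| + C(G) - 1,
\]
where $C(G)$ denotes the number of connected components of $G$ (with the convention $C(\emptyset) = 0$). Since $C(G) \geq 1$ whenever $G$ is non-empty, this immediately gives $2|A(G)| \geq |I|$, proving the theorem. The base cases (graphs on at most one vertex) are immediate. For the inductive step, let $v$ be Greedy's first pick on $G$, set $G' = G \setminus N[v]$, $I' = I \setminus N[v]$, and $j = |N[v] \cap I|$. If $j \leq 1$, then $|I'| \geq |I|-1$, and removing a single closed neighbourhood can destroy at most one connected component of $G$, so $C(G') \geq C(G) - 1$; plugging the inductive hypothesis for $G'$ into $2|A(G)| = 2 + 2|A(G')|$ yields the invariant with slack. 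If $j \geq 2$, then $|I'| = |I| - j$, but Lemma \ref{lem:disconnect} supplies $C(G') \geq C(G) + j - 1$; the extra $j - 1$ components exactly compensate for the additional $j - 2$ units of ``independent-set deficit'' introduced by the expensive move, and the same computation preserves the invariant.

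The main obstacle is really just finding this strengthening. Lemma \ref{lem:unavoidable} on its own is too weak to drive the amortization directly, since it only guarantees one $0$- or $1$-move per \emph{initial} component, whereas the heart of the argument is that each later $j$-move ($j \geq 2$) forces new $0/1$-moves further down in the execution. Encoding ``newly created components are still to be paid off'' as an explicit $C(G)-1$ credit carried through the induction is precisely what lets Lemma \ref{lem:disconnect} be applied locally, one move at a time, and propagated cleanly through the recursion. After that the case analysis is routine, and chordality enters only through Lemma \ref{lem:disconnect}.
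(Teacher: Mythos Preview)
Your proof is correct. Both your argument and the paper's rest on Lemma~\ref{lem:disconnect}: a $j$-move with $j\geq 2$ creates at least $j-1$ new connected components, and these new components can be charged against the expensive move. The paper packages this as a global counting argument: it defines $m_j$ as the total number of $j$-moves during the whole execution, proves by an injective charging scheme that $m_0+m_1 \geq 1 + \sum_{j\geq 2}(j-1)m_j$, and then manipulates the ratio $(\sum_j m_j)/(\sum_j j\,m_j)$ directly. You instead fold the charging into the strengthened inductive invariant $2\,|A(G)| \geq |I| + C(G) - 1$, which is cleaner and sidesteps the ratio algebra entirely. The paper's formulation has the minor advantage that the interval-graph and leafage refinements (Theorem~\ref{thm:interval} and the remark following it) fall out immediately by setting $m_j=0$ for large $j$; your version would need the induction re-run with an invariant like $3\,|A(G)| \geq 2\,|I| + C(G) - 1$, which also goes through but has to be stated separately.
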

\begin{proof}
The approximation ratio achieved by an execution of \textsc{Greedy} is
\begin{equation*}
\frac{\sum_{j \geq 0} m_j }
{\sum_{j \geq 0} j \cdot m_j}.
\end{equation*}
For convenience, we let $m_{0,1} = m_0 + m_1$. A lower bound on the
approximation ratio is thus
\begin{equation}
\label{eq:ratio}
\frac{m_{0,1} + \sum_{j \geq 2} m_j }
{m_{0,1} + \sum_{j \geq 2} j \cdot m_j} 
= 1 - \frac{\sum_{j\geq 2} (j-1) \cdot m_j}{m_{0,1} + \sum_{j \geq 2} j \cdot m_j}.  
\end{equation}

Now, suppose we prove the following:
\begin{equation}
\label{eq:bound}
m_{0,1} \geq 1 + \sum_{j \geq 2} (j-1) \cdot m_j.
\end{equation}
From (\ref{eq:bound}) it follows that a lower bound on (\ref{eq:ratio}) is
\begin{equation}
\label{eq:thirdbound}
1 - \frac{\sum_{j\geq 2} (j-1) \cdot m_j}{1 + \sum_{j \geq 2} (2j-1) \cdot m_j}.  
\end{equation}
Given that
$(2j-1) \geq 2(j-1)$ it follows that the absolute value of the fraction in (\ref{eq:thirdbound}) is strictly smaller than $(1/2)$. This shows that the approximation ratio remains strictly above (1/2), although to a vanishing degree.

It remains to prove (\ref{eq:bound}). To show that
$m_{0,1} \geq 1 + \sum_{j \geq 2} (j-1) \cdot m_j$, consider a  connected component $C$ of $G$. 
For convenience we refer to 0-moves and 1-moves collectively as $\{0,1\}$-moves. 
From Observation \ref{lem:unavoidable}, \textsc{Greedy} will require at least one $\{0,1\}$-move when processing $C$. Let us call this the \emph{original} $\{0,1\}$-move; this will account for the $+1$ in the final bound. Now, if \textsc{Greedy} makes a $j$-move when processing $C$, $j \geq 2$, then by Lemma \ref{lem:disconnect} this splits the component into $j$ or more components. Each of these components requires at least one $\{0,1\}$-move. We re-allocate the original $\{0,1\}$-move to one of the newly created components $C'$, and `assign' the  $j$-move to the $j-1$
$\{0,1\}$-moves in the (at least) $j-1$ other newly created components distinct from $C'$.
 By iterating this process, we injectively map each $j$-move to $j-1$ $\{0,1\}$-moves. By also taking the original $\{0,1\}$-move into account, which contributes a $+1$ term, we obtain the required bound. 
\end{proof}

\begin{theorem}
\label{thm:interval}
\textsc{Greedy} is at least a (2/3)-approximation on interval graphs.  
\end{theorem}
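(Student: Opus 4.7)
The plan is to follow the scheme of Theorem~\ref{thm:chordal} verbatim, but to sharpen it using the fact (cited in the preliminaries) that interval graphs admit a tree decomposition $(\mathcal{B}, \mathbb{T})$ in which each bag is a maximal clique and $\mathbb{T}$ is a \emph{path}. The key structural strengthening is the claim: on any interval graph, every $j$-move performed by \textsc{Greedy} must satisfy $j \leq 2$. To establish this, I would re-run the argument of Lemma~\ref{lem:disconnect} with $\mathbb{T}$ specialized to a path. The sub-path $\mathbb{T}_v$ then has at most two pendant subtrees attached to it (one on each side), and Lemma~\ref{lem:disconnect} already proves that for each $u_i \in I \cap N(v)$ the subtree $\mathbb{T}_{w_i}$ of an external neighbor $w_i$ lies in a \emph{distinct} pendant subtree. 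Thus at most two such $u_i$ can exist. (A direct geometric justification is also available: each $u_i$ must extend beyond $v$'s interval on at least one side, for otherwise every neighbor of $u_i$ would overlap $v$ and no external $w_i$ could exist; and since the $u_i$ are pairwise non-overlapping yet all overlap $v$, at most one of them can extend past $v$ on the left, and at most one on the right.)

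Once $j \leq 2$ is in hand, only $m_0, m_1, m_2$ can be non-zero, and the lower bound from Equation~(\ref{eq:ratio}) collapses to $\frac{m_{0,1} + m_2}{m_{0,1} + 2 m_2}$. The component-counting argument from the proof of Theorem~\ref{thm:chordal}---which injectively assigns, to each $j$-move, $(j-1)$ $\{0,1\}$-moves in the newly created components guaranteed by Lemma~\ref{lem:unavoidable}---specializes in the absence of $j \geq 3$ moves to the bound $m_{0,1} \geq 1 + m_2$ per connected component, and hence at least $1 + m_2$ in aggregate. Since $\frac{m_{0,1} + m_2}{m_{0,1} + 2m_2}$ is monotonically increasing in $m_{0,1}$, substituting the lower bound gives
\begin{equation*}
\frac{m_{0,1} + m_2}{m_{0,1} + 2 m_2} \;\geq\; \frac{1 + 2 m_2}{1 + 3 m_2} \;>\; \frac{2}{3},
\end{equation*}
strictly, for every $m_2 \geq 0$, and tending to $2/3$ as $m_2 \to \infty$.

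The main obstacle is the clean proof of $j \leq 2$. The path-decomposition route is the most uniform with the existing chordal machinery, but one must verify, as in the proof of Lemma~\ref{lem:disconnect}, that each $\mathbb{T}_{w_i}$ genuinely lies in its own pendant subtree rather than crossing through $\mathbb{T}_v$; this follows because $w_i \notin N[v]$ implies $\mathbb{T}_{w_i}$ is disjoint from $\mathbb{T}_v$, so connectivity of $\mathbb{T}_{w_i}$ confines it to a single pendant subtree. After this one structural step, everything reduces to the arithmetic already performed in the chordal case, and the approximation ratio of $2/3$ falls out immediately.
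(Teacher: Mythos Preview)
Your proposal is correct and follows essentially the same approach as the paper: both argue that the path structure of $\mathbb{T}$ forces $|I'| \leq 2$ (hence $m_j = 0$ for $j \geq 3$), and then specialize the chordal bookkeeping to obtain the lower bound $\frac{1+2m_2}{1+3m_2} = 1 - \frac{m_2}{1+3m_2} > 2/3$. The only cosmetic difference is that the paper plugs directly into the already-derived expression~(\ref{eq:thirdbound}), whereas you re-substitute $m_{0,1} \geq 1 + m_2$ into~(\ref{eq:ratio}); the resulting arithmetic is identical.
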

\begin{proof}
Recall the proof of Lemma \ref{lem:disconnect}. When $G$ is an interval graph, the tree $\mathbb{T}$ can assumed to be a path. Hence, $\mathbb{T}_v$ can have at most two pendant subtrees. Hence $|I'| \leq 2$. It follows that $m_j = 0$ for $j \geq 3$. Hence, the lower bound (\ref{eq:thirdbound}) immediately becomes
\begin{equation*}
\label{eq:fourthbound}
1 - \frac{m_2}{1 + 3m_2},  
\end{equation*}
so the approximation ratio is at least (2/3).
\end{proof}
\emph{Remark.} The proof of Theorem \ref{thm:interval} shows that the approximation ratio achieved
by \textsc{Greedy} on a chordal graph is influenced by the \emph{leafage} $\ell(G)$ of the graph $G$: the minimum number of leaves in a tree decomposition of $G$ whose bags are the maximal cliques of $G$ \cite{lin98}. Namely, $|I'| \leq \ell(G)$, and this inequality will hold at every iteration of the algorithm, since leafage is non-increasing under vertex deletion. Hence, $m_j = 0$ for $j \geq \ell(G)+1$, yielding a lower bound on the approximation ratio of,
\[
1 - \frac{\ell(G)-1}{2\ell(G)-1}.
\]

\subsection{Tightness}

Here we show that the results in the previous section cannot be improved.

\begin{figure}
\begin{center}
\includegraphics[scale=0.3]{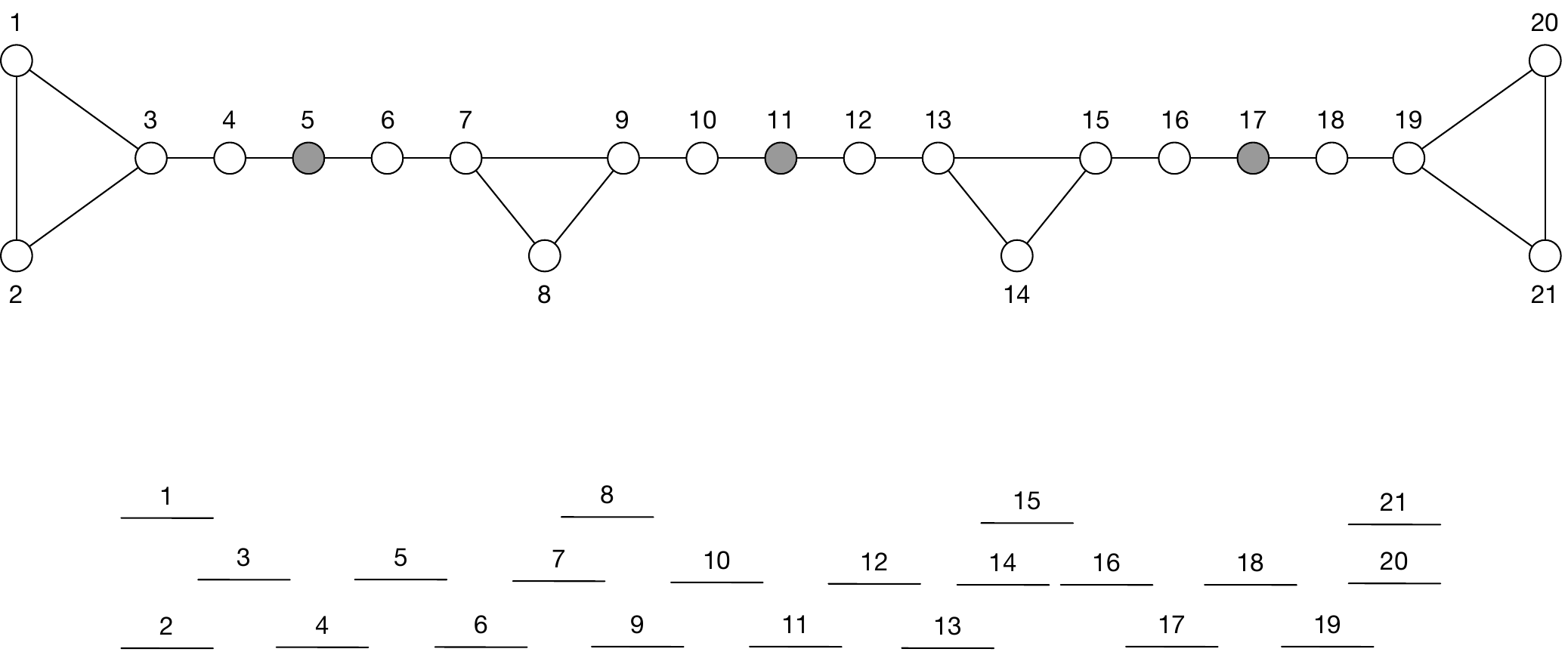}
\caption{The tight construction for interval graphs when $k=3$. In the top part of the figure is the graph itself, in the bottom part is a representation of the interval graph using unit-length intervals. The grey vertices indicate those which \textsc{Greedy} might pick first.}
\label{fig:lowerbound}
\end{center}
\end{figure}
\begin{lemma}
\label{lem:lowerbound}
 \textsc{Greedy} cannot be better than a (2/3)-approximation on interval graphs, even when restricted to unit interval graphs of maximum degree 3.
\end{lemma}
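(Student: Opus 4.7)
The plan is to construct, for every integer $k \geq 1$, a unit interval graph $G_k$ of maximum degree $3$ on which some execution of \textsc{Greedy} (with adversarial tie-breaking) returns an independent set of size $2k+1$, while a maximum independent set of $G_k$ has size $3k+1$. Since $(2k+1)/(3k+1) \to 2/3$ as $k \to \infty$, this yields the lemma. The guiding intuition is to realise the tight scenario behind the bound (\ref{eq:thirdbound}) by forcing $m_2 = k$ and $m_{0,1} = k+1$.

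The graph $G_k$ is built by chaining $k$ three-vertex ``trap'' gadgets along the line. Each trap $i$ consists of vertices $u_i^L, v_i, u_i^R$ forming a path $u_i^L - v_i - u_i^R$, with $u_i^L$ and $u_i^R$ non-adjacent. Consecutive traps are joined by a ``bridge'' triangle on three vertices $b_i^1, b_i^2, b_i^3$, in which $b_i^1$ is adjacent to $u_i^R$ and $b_i^3$ is adjacent to $u_{i+1}^L$, and no other cross-edges appear. The two ends of the chain are capped by two further triangles, one of whose ``connector'' vertices is adjacent to $u_1^L$ (respectively $u_k^R$). I would realise $G_k$ by giving explicit unit-length intervals: for instance, placing the three trap intervals of trap $i$ at left endpoints $3i,\ 3i+0.8,\ 3i+1.6$ and tucking the triangle intervals into the gap between $u_i^R$ and $u_{i+1}^L$ at carefully chosen offsets, so that precisely the listed adjacencies arise. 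By inspection, every vertex then has degree $2$ or $3$.

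Two verifications finish the lower bound. First, the set $I$ consisting of $u_i^L, u_i^R$ for each $i$, together with one ``non-connector'' vertex from each of the $k+1$ triangles, is independent and has size $2k + (k+1) = 3k+1$; this meets the trivial upper bound of at most $2$ per trap and at most $1$ per triangle. Second, every vertex has degree $2$ or $3$ and in particular $v_1$ has degree $2$, so adversarial tie-breaking may select $v_1$ first, which is a $2$-move with respect to $I$. By Lemma~\ref{lem:disconnect}, deleting $N[v_1]$ splits $G_k$ into the left-cap triangle (now isolated) and a continuation; the bridge triangle to the right of $v_1$ remains intact, losing only its edge to $u_1^R$. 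The continuation is structurally a copy of $G_{k-1}$, with the ex-bridge triangle playing the role of its new left cap. Iterating, \textsc{Greedy} may be forced to pick $v_1, \ldots, v_k$ in $k$ successive $2$-moves and then one vertex from each of the $k+1$ surviving triangles, for a total of $2k+1$ picks.

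The main obstacle I expect is the interval calibration: the left endpoints must be tuned so that exactly the listed adjacencies appear, since any accidental overlap could either push a degree above $3$ or create a shortcut that prevents the desired disconnection after a $2$-move. I would handle this by means of a figure in the spirit of Figure~\ref{fig:lowerbound}, accompanied by a short case check on pairwise differences of left endpoints; the remainder is routine bookkeeping.
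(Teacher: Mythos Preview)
Your construction is exactly the one the paper uses: $k$ copies of $P_3$ interleaved with $k{+}1$ triangles, with \textsc{Greedy} adversarially choosing the $k$ middle path vertices first and then one vertex per surviving triangle, giving $2k{+}1$ against an optimum of $3k{+}1$. The paper's proof is terser (it simply points to Figure~\ref{fig:lowerbound} rather than arguing the recursive ``continuation is $G_{k-1}$'' structure), but the idea is identical.

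One small calibration remark: with the specific trap spacing $3i,\,3i+0.8,\,3i+1.6$ you suggest, the gap between $u_i^R$ and $u_{i+1}^L$ is too tight to place all three triangle intervals so that $b_i^2$ avoids both $u_i^R$ and $u_{i+1}^L$; a slightly larger period (e.g.\ about $3.7$) resolves this. You already flag the coordinate tuning as the main obstacle, so this is not a gap in the argument, just a parameter to adjust when you write out the explicit representation.
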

\begin{proof}
Consider the graph consisting of $k \geq 1$ paths and $k+1$ triangles as shown in Figure \ref{fig:lowerbound} (here $k=3$). This is a unit interval graph of maximum degree three. 
It is possible that \textsc{Greedy} starts by picking the $k$ grey degree-2 vertices. This yields a solution of size $k + (k+1) = 2k+1$. However, an optimum solution (obtained, for example, by eliminating simplicial vertices left to right) contains $2k + (k+1) = 3k+1$ vertices. This gives a  ratio of $\frac{2k+1}{3k+1}$ which is $(2/3) + o(1)$ as $k \rightarrow \infty$.
\end{proof}

\begin{lemma}
\label{lem:lowerboundChordal}
 \textsc{Greedy} cannot be better than a (1/2)-approximation on chordal graphs.
\end{lemma}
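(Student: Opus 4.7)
The plan is to exhibit, for each integer $j\geq 2$, a chordal graph $G_j$ on which the adversarial version of \textsc{Greedy} returns an independent set of size $j+1$ while the maximum independent set has size $2j$, so that the approximation ratio $(j+1)/(2j) = \frac{1}{2} + \frac{1}{2j}$ tends to $1/2$ as $j\to\infty$. The construction generalises the interval tight example of Lemma~\ref{lem:lowerbound} by replacing the $2$-way connector vertices with a single $j$-way connector, exploiting the fact that clique trees of chordal graphs may have unbounded leafage (see the remark following Theorem~\ref{thm:interval}).

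Concretely, I would take a central vertex $b$ joined to $j$ ``spokes'' $a_1,\ldots,a_j$, and for each $i\in\{1,\ldots,j\}$ introduce $j$ further vertices $k_{i,1},\ldots,k_{i,j}$ together with an extra vertex $t_i$, forming two $(j+1)$-cliques $K_i=\{a_i,k_{i,1},\ldots,k_{i,j}\}$ and $K'_i=\{t_i,k_{i,1},\ldots,k_{i,j}\}$ that share the $j$ vertices $k_{i,1},\ldots,k_{i,j}$. There are no edges between different arms apart from those incident to $b$. Chordality of $G_j$ can be verified by exhibiting the perfect elimination ordering $t_1,k_{1,1},\ldots,k_{1,j},a_1,t_2,\ldots,a_j,b$; at each step one easily checks that the vertex to be removed is simplicial in the remaining graph. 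Equivalently, a clique tree can be assembled from the maximal cliques $\{b,a_i\}$, $K_i$, and $K'_i$.

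Next I would compute degrees: $d(b)=d(t_i)=j$ while $d(a_i)=d(k_{i,s})=j+1$, so $b$ together with the $t_i$'s are precisely the minimum-degree vertices. Adversarial tie-breaking can therefore make \textsc{Greedy} pick $b$ first, which removes $N[b]=\{b,a_1,\ldots,a_j\}$ and disconnects the graph into $j$ isolated $(j+1)$-cliques $K'_i$; from each such clique \textsc{Greedy} extracts one further vertex, yielding a solution of total size $1+j$. On the other hand, $\{a_1,t_1,\ldots,a_j,t_j\}$ is an independent set of size $2j$: there is no edge $a_it_i$ within an arm (since $t_i \notin K_i$ and $a_i \notin K'_i$) and no edges between different arms outside $N[b]$. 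Conversely, any arm being a union of two cliques contributes at most two vertices to any independent set, so an independent set avoiding $b$ has size at most $2j$, while including $b$ forces the exclusion of every $a_i$ and caps the size at $1+j < 2j$ for $j\geq 2$; hence $\alpha(G_j)=2j$.

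The main obstacle is verifying chordality of $G_j$ cleanly, which becomes routine once the simplicial ordering above is written down. A secondary subtlety is that alternative tie-breaks (for instance, picking a $t_i$ first) would in fact let \textsc{Greedy} attain the optimum $2j$, but adversarial tie-breaking selects by definition the ordering that is worst for \textsc{Greedy}, which here is the one that starts at $b$.
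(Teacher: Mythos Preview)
Your construction is correct and in fact coincides with the paper's: a $(j+2)$-clique with the edge $\{a_i,t_i\}$ removed is precisely the union of the two $(j+1)$-cliques $K_i$ and $K'_i$ that share $j$ vertices, so your $G_j$ is the same graph the paper builds (with $j=k$, $b=u$, $a_i=v_i$, $t_i=w_i$), and the analysis of degrees, the adversarial \textsc{Greedy} run, and the optimum all match. The paper omits the explicit chordality check and the verification that $\alpha=2j$, so your write-up is slightly more detailed but not a different approach.
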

\begin{proof}
Fix $k \geq 2$. Create $k$ cliques $G_1,\ldots,G_k$ each of size $k+2$. 
In each $G_i$, delete an arbitrary edge $\{v_i, w_i\}$. 
Introduce a new vertex $u$ and connect it to each of $v_1, v_2, \ldots, v_k$. \textsc{Greedy} can pick $u$ and then one vertex from each of the $G_i \setminus \{v_i\}$ cliques, yielding a solution of size $k+1$. 
However, there exists an independent set of size $2k$, by picking all the $v_i, w_i$ vertices. 
This gives a ratio of $\frac{k+1}{2k}$ which is $(1/2) + o(1)$ as $k \rightarrow \infty$. 
\end{proof}

\section{Beyond Chordal and Interval Graphs: Perfect and 2-Track Interval Graphs}
\label{sec:perfect}

This section contains some obstacles to generalizing the results on chordal and interval graphs to obtain similar tight (constant) bounds on \textsc{Greedy} for related graph classes.
A well-known superset of chordal graphs are the \emph{perfect} graphs \cite{golumbic2004algorithmic}, so it is natural to ask whether \textsc{Greedy} yields a constant-factor approximation on perfect graphs. 
It does not. In fact, \textsc{Greedy} already fails on \emph{permutation} graphs (a subclass of perfect graphs incomparable to chordal graphs). A
graph $G$ on vertices $\{1, \ldots n\}$ is a permutation graph if there is a permutation $\pi$ of $\{1, \ldots n\}$ such that vertices $i<j$ are adjacent in $G$ if and only if $j$ occurs before $i$ in $\pi$ \cite[Chapter 7]{golumbic2004algorithmic}. 

\begin{lemma}
\label{lem:permbad}
For every $k \geq 3$ there exists a permutation graph on $2k$ vertices with $\Delta=2k-2$, such that \textsc{Greedy} can return a solution of size 2 but a maximum independent set has size $k$.
\end{lemma}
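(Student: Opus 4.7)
The plan is to construct a permutation graph built from three pieces: a pivot vertex $v$ whose neighbourhood is an independent set $N = \{n_1, \ldots, n_{k-1}\}$, a clique $C = \{c_1, \ldots, c_k\}$ that $v$ does not touch, and a distinguished vertex $c_1 \in C$ chosen to be non-adjacent to every $n_i$, while every other $c_j \in C$ is adjacent to every $n_i$. Intuitively, \textsc{Greedy} picks $v$ first, wipes out all of $N$ in that single step, and is then trapped inside the clique $C$, so the greedy output has size two, while $N \cup \{c_1\}$ is an independent set of size $k$.

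Next I would verify the three numerical claims. A short degree count gives $\deg(v) = \deg(c_1) = k-1$, $\deg(n_i) = k$, and $\deg(c_j) = 2k-2$ for $j \geq 2$, so $\Delta = 2k-2$ and \textsc{Greedy} may legitimately start with $v$. After removing $N[v] = \{v\} \cup N$ the clique $C$ is all that survives, and \textsc{Greedy} halts after picking one more vertex. For the optimum I would observe that any independent set containing some $c_j$ with $j \geq 2$ has size at most $2$ (that $c_j$ kills $N$ and the rest of $C$, so at most $v$ can be added), that any independent set containing $v$ has size at most $2$ (it cannot meet $N$ and meets $C$ in at most one vertex), while $N \cup \{c_1\}$ is independent of size exactly $k$.

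The main obstacle is proving that the construction is a permutation graph. My plan is to exhibit an explicit two-line permutation diagram: order the vertices along the top as $n_1, \ldots, n_{k-1}, v, c_1, c_2, \ldots, c_k$ and along the bottom as $v, c_k, c_{k-1}, \ldots, c_2, n_1, \ldots, n_{k-1}, c_1$, declaring two vertices adjacent iff their segments cross. The crossings then split into a handful of routine cases: the $n_i$'s keep their relative order on both lines so $N$ is independent; $c_2, \ldots, c_k$ reverse their relative order and hence form a clique among themselves; $v$ moves from position $k$ on the top to position $1$ on the bottom, crossing exactly the $n_i$'s and no $c_j$. The delicate point is placing $c_1$: it must sit to the right of every $n_i$ on both lines (avoiding all $N$--$c_1$ crossings) while ending up in the opposite relative order to every other $c_j$ (producing all the $c_1$--$c_j$ crossings). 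Putting $c_1$ immediately after $v$ on the top and at the far right on the bottom achieves both simultaneously. Reading off positions then yields the explicit permutation $\pi = (k, 2k, 2k-1, \ldots, k+2, 1, 2, \ldots, k-1, k+1)$ required by the paper's definition.
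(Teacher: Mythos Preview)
Your construction is correct and follows the same underlying idea as the paper: a pivot vertex complete to an independent set, together with a clique that is complete to that independent set, so that \textsc{Greedy} can pick the pivot, wipe out the independent set, and then take a single clique vertex. Your permutation diagram and the explicit $\pi$ check out.

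The paper's realisation is slightly simpler. It takes the independent set to have size $k$ and the clique to have size $k-1$, so the independent set itself is the optimum; there is no need for a distinguished $c_1$ with special non-adjacencies. Concretely, the paper uses the permutation $v_{k+1},\ldots,v_{2k},v_1,v_k,v_{k-1},\ldots,v_2$, giving $I=\{v_{k+1},\ldots,v_{2k}\}$ independent, $C=\{v_2,\ldots,v_k\}$ a clique, $v_1$ adjacent to all of $I$, and every vertex of $C$ adjacent to every vertex of $I$. Then $v_1$ and the vertices of $I$ all have minimum degree $k$, \textsc{Greedy} may pick $v_1$, and the optimum is just $I$. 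Your extra gadget $c_1$ works but is not needed; swapping the sizes of $N$ and $C$ would let you drop it and shorten the argument.
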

\begin{proof}
Fix $k\geq 3$, and consider a permutation of $2k$ elements
\[
v_{k+1}, \ldots, v_{2k}, v_1, v_k, v_{k-1},\ldots, v_{2}.
\]
This induces a permutation graph (with respect to the identity permutation $v_1, \ldots, v_{2k}$) in which $\{v_{k+1}, \ldots, v_{2k}\} = I$ is an independent set,  $\{v_{2}, \ldots, v_{k}\} = C$ is a clique, $v_1$ is connected to everything in $I$, and every element in $I$ is connected to every element in $C$. Hence $v_1$ has degree $k$, as do the elements of $I$, and the elements of $C$ have degree $2k-2$. As $k\geq 3$, we have that $2k-2>k$. An optimal solution has size $k$, by picking $I$, but \textsc{Greedy} can pick $v_1$ and a single element from $C$, giving a solution of size 2.
\end{proof}

A natural generalization of interval graphs that do not yield a constant-factor approximation are the \emph{2-track interval} graphs \cite{fellows2009parameterized}. 
A 2-track interval graph is any graph that can be written as $G=(V, E_1 \cup E_2)$ where $(V, E_1)$ and $(V, E_2)$ are both interval graphs.

\begin{lemma}
\label{lem:2trackbad}
For every $k \geq 3$ there exists a 2-track interval graph on $2k$ vertices with $\Delta=2k-2$, such that \textsc{Greedy} can return a solution of size 2 but a maximum independent set has size $k$.
\end{lemma}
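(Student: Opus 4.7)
My plan is to adapt the construction used in Lemma \ref{lem:permbad} so that the graph, while still exhibiting the same bad behaviour for Greedy, admits a representation as the edge-union of two interval graphs. Take the vertex set $\{v_1\} \cup J \cup C$, where $J = \{u_1, \ldots, u_{k-1}\}$, $C' = \{c_1, \ldots, c_{k-1}\}$, and $C = \{c^*\} \cup C'$, giving $2k$ vertices in total. The edges are (i) the star $v_1 u_i$ for each $i \in \{1, \ldots, k-1\}$, (ii) a clique on $C$, and (iii) all edges $c_i u_j$ with $i, j \in \{1, \ldots, k-1\}$. Note that $c^*$ is the only vertex of $C$ with no neighbours in $J$.

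Next I verify the basic properties. The degrees are $d(v_1) = d(c^*) = k-1$, $d(u_i) = k$, and $d(c_i) = (k-1) + (k-1) = 2k-2$, so $\Delta = 2k-2$. The set $J \cup \{c^*\}$ is independent and has size $k$. Conversely, any independent set meets the clique $C$ in at most one vertex; if it avoids $v_1$ it therefore lies in $J \cup C$ and has at most $|J|+1 = k$ vertices, while if it contains $v_1$ then, being disjoint from $N(v_1) = J$, it is contained in $\{v_1\} \cup C$ and has at most $2$ vertices. Hence $\alpha(G) = k$. The minimum degree is $k-1$, attained by $v_1$ and $c^*$, so under adversarial tie-breaking Greedy first picks $v_1$; this deletes $\{v_1\} \cup J$ and leaves the clique $C$ intact, from which Greedy then picks a single additional vertex. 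The returned solution has size $2$.

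Finally, I exhibit the 2-track structure by partitioning $E = E_1 \cup E_2$, where $E_2 = \{v_1 u_i : i \in \{1, \ldots, k-1\}\}$ is the star and $E_1$ contains the remaining edges (the clique on $C$ and the complete bipartite graph between $C'$ and $J$). The graph $(V, E_2)$ is a star plus isolated vertices and is therefore interval. For $(V, E_1)$ I give an explicit interval representation: each $c_i \in C'$ receives the long interval $[0, k]$; $c^*$ receives the short interval $[-1, \tfrac{1}{2}]$, which meets every interval of $C'$ but is disjoint from each $u_j$; each $u_j$ receives $[j, j + \tfrac{1}{2}]$, which is contained in $[0, k]$ but disjoint from the other $u_{j'}$ and from $c^*$; and $v_1$ receives a far-off interval such as $[-10, -9]$. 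A routine check shows that the pairwise intersections under this representation coincide exactly with the edges of $E_1$, so $(V, E_1)$ is interval and $G$ is 2-track interval. The main point that needs care is this final intersection check, specifically ensuring that the small intervals used for $J$ and $c^*$ do not accidentally overlap, which is arranged by the disjoint placements chosen above.
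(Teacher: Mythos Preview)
Your proof is correct and follows essentially the same strategy as the paper: a hub vertex adjacent to an independent set, a clique (almost) completely joined to that independent set, and a 2-track decomposition placing the star on one track and the clique-plus-bipartite structure on the other. Your graph is a minor variant of the paper's (you shift one vertex from the independent set into the clique as the simplicial vertex $c^*$, so the minimum degree becomes $k-1$ rather than $k$), but the construction, the Greedy analysis, and the interval representations are verified in the same way and all check out.
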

\begin{proof}
Fix $k \geq 3$. Let $(V,E_1)$ be an interval graph on $2k$ vertices $V$ consisting of a vertex $u_1$ that is connected to every vertex in an independent set $I=\{v_1, \ldots, v_k\}$, together with a collection of $k-1$ degree-0 nodes $\{u_2, \ldots u_k\} = X$. Let $(V, E_2)$ consist of the same vertices but now $u_1$ is a node of degree 0, $X$ is a clique, $I$ remains an independent set, and every vertex in $X$ is connected to every vertex in $I$. In $G=(V,E_1 \cup E_2)$ $u_1$ has degree $k$, as does each element of $I$, and this is the minimum degree since the vertices in $X$ each have degree $2k-2$. An optimal solution has size $k$, by picking $I$, but \textsc{Greedy} can pick $u_1$ and a single vertex from $X$, giving a solution of size 2.
\end{proof}

In both constructions the maximum degree $\Delta$ is $2k-2$, and the approximation ratio is $2/k$. Expressed in terms of $\Delta$ this shows that $\frac{4}{\Delta+2}$ is an upper bound on the approximation ratio of \textsc{Greedy} on permutation and 2-track interval graphs.

\section{Conclusion and Open Questions}

A number of natural open questions remain. First: for which other non-trivial graph classes does \textsc{Greedy} give constant-factor approximations?
We note in passing that in contrast to the negative results in Section \ref{sec:perfect}, for some graph classes a constant-factor approximation is trivial.   
For example, \emph{Planar} graphs always have a vertex of degree at most 5, so \textsc{Greedy} trivially selects at least $1/6$ of the vertices of the graph, and hence at least $1/6$ of any independent set. 
However, tight analysis would still be interesting in such cases. 

Next, \textsc{Greedy} has also been analysed when additional `advice' is available to helps the algorithm to break ties~\cite{halldorsson1995greedy,krysta2020ultimate}: this contrasts with the adversarial/no advice approach to tie-breaking assumed in our article. 
It would be interesting to explore \textsc{Greedy-with-advice} on interval and chordal graphs. 
For example, if $\Delta \leq 3$, \textsc{Greedy} on interval graphs can be made optimal by breaking ties in favour of simplicial vertices, which will exist at every iteration.
What kind of tie-breaking might be advantageous for $\Delta \geq 4$? Relatedly, we note the literature which studies the complexity of computing the best possible performance of \textsc{Greedy} on a given graph, or determining whether a certain constant-factor approximation is achieved on the graph \cite{bodlaender1997hard,krysta2020ultimate}. 
What does this complexity landscape look like for interval and chordal graphs?\\
\\
\textbf{Acknowledgements.} M. Frohn was supported by grant
OCENW.M.21.306 from the Dutch
Research Council (NWO).

\bibliography{main}{}
\bibliographystyle{plain}
\end{document}